\newcommand{\Ke}{\kern0.4em}
\newcommand{\ke}{\kern0.2em}
\newcommand{\disp}{\displaystyle}
\newcommand{\inv}{^{-1}}
\newcommand{\q}{\quad}
\newcommand{\bb}[1]{\mathbb#1}
\newcommand{\eu}[1]{{\mathfrak#1}}
\newcommand{\im}{\mathop{\rm image}\nolimits}
\newcommand{\rk}{\mathop{\rm rk}\nolimits}
\newcommand{\transp}{^{\rm t}}
\newcommand{\Mat}{\mathop{\rm Mat}\nolimits}
\newcommand{\Sym}{\mathop{\rm Sym}\nolimits}
\newcommand{\SL}{\mathop{\rm SL}\nolimits}
\newcommand{\GL}{\mathop{\rm GL}\nolimits}
\newcommand{\Sp}{\mathop{\rm Sp}\nolimits}
\newcommand{\grass}{\mathop{\eu{G}\eu{r}}\nolimits}
\newcommand{\lgrass}{\mathop{\eu{L}\eu{G}\eu{r}}\nolimits}
\newcommand{\by}{\!\times\!}
\renewcommand{\phi}{\varphi}
\newtheorem{thm}{Theorem}
\newtheorem{prop}[thm]{Proposition}
\newtheorem{cor}[thm]{Corollary}
\begin{document}
\frontmatter          
\pagestyle{headings}  
\addtocmark{Homology of Generic Stabilizer States} 
%
%
\mainmatter              
\title{Homology of Generic Stabilizer States}
\titlerunning{Generic Stabilizer States}  
%
\author{Klaus Wirthm\"uller}
\authorrunning{Klaus Wirthm\"uller}   
%
\tocauthor{Klaus Wirthm\"uller}
\institute{Fachbereich Mathematik der
           Technischen Universit\"at Kaiserslautern,\\
           Postfach 3049, 67653 Kaiserslautern, Germany\\
           \email{wirthm@mathematik.uni-kl.de}}

\maketitle              

\begin{abstract}
This work is concerned with multi-party stabilizer states in the sense of quantum information theory. We investigate the homological invariants for states of which each party holds a large equal number $N$ of quantum bits. We show that in many cases there is a generic expected value of the invariants\ts: for large $N$ it is approximated with arbitrarily high probability if a stabilizer state is chosen at random. The result suggests that typical entanglement of stabilizer states involves but the sets comprising just over one half of the parties.

Our main tool is the Bruhat decomposition from the theory of finite Chevalley groups.
\end{abstract}
\section{Introduction}
\label{int}
\noindent
Entangled multipartite states are central to quantum information theory. As a reasonably comprehensive understanding of such states has turned out to be largely illusory, interest has been focused on partial aspects, in particular on special classes of states that are easier to study. Among such classes that of stabilizer states has attracted attention due to the fact that it includes highly entangled states but, on the other hand, lends itself to a much simpler mathematical treatment than is possible for general quantum states.

The question addressed in this note has two roots. In \cite{wirthmueller} a series of invariants of stabilizer states has been introduced in an attempt to classify qualitative aspects of entanglement. The nature and construction of these invariants, which rely on ideas from algebraic geometry, suggest an uneven distribution of their values\ts: while a few special states are expected to have equally special invariants, the large majority is thought to have a common \textit{generic} value of them.

This view is supported by evidence from the first non-trivial invariant $H^2(|\psi\rangle)$. Its meaning is quite explicit since it just measures the number of all-party GHZ states that may be extracted from $|\psi\rangle$ \cite{bravyi3,wirthmueller}. Here the other root of our note comes in\ts:  the work \cite{smith2} includes a result according to which stabilizer states from which a significant number of GHZ states can be extracted become increasingly scarce as the number of qubits per party increases. This may be restated saying that the suitably normalised generic value of $H^2(|\psi\rangle)$ tends to zero in this situation (where each party holds the same large number of qubits).

Stabilizer states were first described in \cite{gottesman_96} in the context of error-correcting quantum codes, and the formalism has been repeatedly presented since \cite{calderbank4,gottesman_98,hostens3,knill,rains,wirthmueller}. The main point of interest here is that pure stabilizer states in an $l$-qubit Hilbert space $\cal H$ are parametrised by the Lagrangian subspaces $L$ of a $2l$-dimensional symplectic vector space $G$ over the field $\bb F_2$. More precisely to each Lagrangian there correspond $2^l$ distinct but equivalent stabilizer states. Thus probabilities involving properties of stabilizer states may be expressed using the finite set $\lgrass{(G)}$ of all such Lagrangians as a sample space with Laplace probability measure $P$. For our investigation of asymptotics all dimensions in this set-up will have to be scaled by a large common factor $N$, since each of the $l$ parties will be given not one but a large number $N$ of qubits. The homological invariants of these stabilizer states are then expected to grow of order $N$ too because they behave additively with respect to direct sums \cite{wirthmueller}.

To geometers $\lgrass{(G)}$ is well-known as a Grassmannian, which naturally carries the structure of a projective algebraic variety over the field $\bb F_2$. The classification of stabilizer states by their homological invariants yields a partition of this variety into strata which in turn are quasi-projective varieties. While standard arguments based on the notions of dimension, multiplicity, and more generally the Hilbert polynomial do give asymptotic results concerning the number of points in such varieties they seem to be less of a suitable tool here since the kind of asymptotics is not the right one. The reason is that the context forces us to work over a fixed finite field --- the prime field $\bb F_2$ being the most important one --- and precludes the possibility of passing to finite extensions, let alone the algebraic closure. Nevertheless in the search for generic properties of states we found that the notion of dimension has some predictive value at least.

From yet another point of view the variety $\lgrass{(G)}$ is a homogeneous space of the symplectic group $\Sp(2l,\bb F_2)$, which is a Chevalley group of type $C_l$ over the field $\bb F_2$. This fact makes notions from group theory available which include a method to count the points of $\lgrass{(G)}$ via the Bruhat decomposition into Schubert cells \cite{carter,steinberg} --- indeed this would work over an arbitrary finite field and it determines the Weil zeta function of a Grassmannian. It also is the main tool used in this paper as the relevant strata of $\lgrass{(G)}$ are related to the Schubert cells.

The investigation becomes somewhat simpler if the Lagrange Grassmannian $\lgrass{(G)}$ is replaced by the ordinary one, $\grass{(G,k)}$, which parametrises all $k$-dimensional linear subspaces of the $l$-dimensional vector space $G$ over $\bb F_2$. From the group theoretic point of view we would then be dealing with $\SL(l,\bb F_2)$, a Chevalley group of type $A_{l-1}$ rather than $C_l$. In the context of stabilizer states this situation has its own interest as it arises with CSS states\ts: the CSS state with parameter $L\in\grass{(G,k)}$ corresponds to the Lagrangian
\begin{displaymath}
  L\oplus L^\perp\subset G\oplus G
\end{displaymath}
in $\lgrass{(G\oplus G)}$ where $G\oplus G$ is equipped with the standard symplectic form. We therefore treat this case first before proceeding to the Lagrangian one.

In Section \ref{res} we state qualitative versions of our results as Theorems \ref{css_qual} and \ref{lag_asymptotics}. The first refers to the non-Lagrangian case of $N\mskip-0.7\thinmuskip k$-dimensional subspaces $L\subset\bb F_2^{Nl}$ just mentioned. It roughly states that with arbitrarily high probability, for a large number $N$ of qubits held by each party, nearly all homology of $L$ is concentrated in the single group $H^{l-k+1}L$, and its normalised rank is close to a predetermined number that can be calculated from $l$ and $k$. Similarly Theorem \ref{lag_asymptotics} states that for large $N$, with arbitrarily high probability the homology of a Lagrangian $L\subset\bb F_2^{2Nl}$ is nearly all concentrated in $H^{l+1}L$, and its normalised rank is close to a predetermined number.

The proofs are compiled in Section \ref{pro}, and include much more precise quantitative asymptotic results.
\section{Results}
\label{res}
\noindent
We will work over a fixed finite field $\bb F$ throughout. While only in the case of the prime fields $\bb F_p$ there is a direct correspondence between Lagrangians and stabilizer states (in a space of qu-$p\mskip0.7\thinmuskip$-its), we require no such restriction and allow that $\bb F$ has an arbitrary number $q=p^e$ of elements. We also fix an integer $l\ge2$ while $N\in\bb N$ will be variable\ts; from the point of view of stabilizer states over $\bb F$ we would be dealing with $l$ parties controlling $N$ qudits each. 

The homology $H^j(|\psi\rangle)$ of the CSS state $|\psi\rangle$ is $H^j(L\oplus L^\perp)$ by definition, and since homology is additive by \cite{wirthmueller}\ts Lemma 2 we need only study $H^jL$ for an arbitrary linear subspace $L\subset\bb F^{Nl}$. For the same reason we assume the latter of dimension $N\mskip-0.7\thinmuskip k$ with fixed $k\in\bb N$, so that $\dim H^jL$ will be of order $N$ too. We thus let $\eu G_k(N):=\grass{(\bb F^{Nl},Nk)}$ denote the Grassmannian variety of $N\mskip-0.7\thinmuskip k$-dimensional linear subspaces of the standard vector space $\bb F^{Nl}$, and consider $\eu G_k(N)$ as the sample space of a probability space with Laplacian probability measure $P$.

Recall \cite{wirthmueller} that the construction of the homology groups $H^jL$ starts from a chain complex
\begin{equation}\label{chain_complex}
  \begin{xy}
    \xymatrix@R0.4pc{{C^0L}\ar@{->}[r]^{\delta_0}&
                      {C^1L}\ar@{->}[r]^{\delta_1}&
                      {\cdots}\ar@{->}[r]^(.42){\delta_{j-1}}&
                      {C^jL}\ar@{->}[r]^(.42){\delta_j}&
                      {C^{j+1}L}\ar@{->}[r]^(.54){\delta_{j+1}}&
                      {\cdots}\\}
  \end{xy}
\end{equation}
where the chain group $C^jL$ is spanned by all vectors of $L$ that are local to some subset of $j$ parties. The group $H^jL$ then is the $j$-th cohomology group of that complex, that is $H^jL=\ker\delta_j/\im\delta_{j-1}$ for each $j>0$.

\vspace*{12pt}
\begin{thm}\label{css_qual}
Let $L\in\eu G_k(N)$ be a sample.
\begin{enumerate}
  \item If $j<l\!-\!k$ then $\lim_{N\to\infty}P(C^jL\neq0)=0$.
  \item If $j<l\!-\!k$ or $j>l\!-\!k\!+\!2$ then
    $\lim_{N\to\infty}P(H^jL\neq0)=0$.
\end{enumerate}
Now let any real $\epsilon>0$ be given. Then
\begin{enumerate}
  \setcounter{enumi}{2}
  \item $\lim_{N\to\infty}P({1\over N}\dim C^{l-k}L\ge\epsilon)=0$, and
  \vskip 3pt
  \item $\lim_{N\to\infty}
          P({1\over N}\dim H^{l-k}L+{1\over N}\dim H^{l-k+2}L\ge\epsilon)=0$.
\end{enumerate}
Finally
\begin{enumerate}
  \setcounter{enumi}{4}
  \item $\lim_{N\to\infty}P\left({1\over N}\dim H^jL\ge\epsilon\right)=0$
    if $j\neq l\!-\!k\!+\!1$, and
    \begin{displaymath}
      \lim_{N\to\infty}
        P\left(\bigl|{\textstyle{1\over N}}\dim H^{l-k+1}L-\chi_k\bigr|
         \ge\epsilon\right)
        =0
    \end{displaymath}
    with $\chi_0=0$ and  $\chi_k=\disp{l-2\choose k-1}$ for $k>0$.
\end{enumerate}
\end{thm}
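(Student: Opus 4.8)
The plan is to obtain the concentration of $H^{m+1}L$, where $m:=l-k$, from the earlier parts of the theorem together with the single exact relation furnished by the Euler characteristic. The first assertion, that $\frac1N\dim H^jL\to0$ in probability for every $j\neq m+1$, is immediate. Part 2 forces $H^jL=0$ with probability tending to $1$ as soon as $j<m$ or $j>m+2$, while part 4 controls the two remaining degrees $m$ and $m+2$ jointly; since both summands there are non-negative, each has normalised rank tending to $0$. It therefore remains only to pin down the normalised rank in the surviving degree $m+1$.

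For a subset $S$ of the $l$ parties let $V_S\subset\bb F^{Nl}$ be the space of vectors local to $S$, so that $\dim V_S=N|S|$ and, by the construction recalled above, $C^jL=\bigoplus_{|S|=j}(L\cap V_S)$; in particular $\dim C^jL=\sum_{|S|=j}\dim(L\cap V_S)$. The alternating sum of chain ranks is a homological invariant, so that for every individual $L$
\begin{displaymath}
  \sum_j(-1)^j\dim H^jL=\sum_{S}(-1)^{|S|}\dim(L\cap V_S).
\end{displaymath}
Feeding in the first assertion, all terms but $j=m+1$ on the left tend to $0$ after division by $N$, so the difference between $\frac1N\dim H^{m+1}L$ and $(-1)^{m+1}\frac1N\sum_{S}(-1)^{|S|}\dim(L\cap V_S)$ tends to $0$ in probability.

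It remains to evaluate the right hand side. For each fixed $S$ one has the deterministic lower bound $\dim(L\cap V_S)\ge N\max(0,|S|-m)$, and the Schubert-cell point counts from the Bruhat decomposition already underlying parts 1 and 3 show that the expected excess over $N\max(0,|S|-m)$ stays bounded as $N\to\infty$; Markov's inequality then gives $\frac1N\dim(L\cap V_S)\to\max(0,|S|-m)$ in probability. As $l$ is fixed there are only finitely many $S$, so the whole alternating sum converges in probability to $\sum_{j=0}^{l}(-1)^j{l\choose j}\max(0,j-m)$. A short finite-difference computation evaluates this: since $j-m$ has degree $1<l$ its alternating binomial sum vanishes, whence the sum equals $\sum_{j=0}^{m}(-1)^j{l\choose j}(m-j)$, and the partial-sum identities $\sum_{j=0}^m(-1)^j{l\choose j}=(-1)^m{l-1\choose m}$ and $\sum_{j=0}^m(-1)^j j{l\choose j}=l(-1)^m{l-2\choose m-1}$ reduce it to $(-1)^{m+1}{l-2\choose m-1}=(-1)^{l-k+1}{l-2\choose k-1}$. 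Multiplying by $(-1)^{m+1}$ yields the limit $\chi_k={l-2\choose k-1}$ in probability; the case $k=0$ is covered as well, since then $m=l$ and ${l-2\choose l-1}=0=\chi_0$.

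The step I expect to be the main obstacle is the convergence $\frac1N\dim(L\cap V_S)\to\max(0,|S|-m)$ in the regime $|S|>m$, where the intersection is genuinely of linear size: one must control not merely whether $L$ meets $V_S$ but by how much the intersection can exceed its forced dimension, and this is exactly where the explicit Gaussian-binomial counts are indispensable. The saving grace is that over a fixed finite field the excess is governed by the corank of a random matrix whose rows outnumber its columns by $N(|S|-m)\ge N$, so its expectation is tiny and a crude Markov bound suffices; the genuinely delicate estimate, namely the boundary case $|S|=m$ where the expected excess is only $O(1)$, is precisely what part 3 already supplies.
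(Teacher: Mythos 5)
Your derivation of statement 5 from statements 1--4 is essentially the paper's own argument: the paper likewise passes to the Euler characteristic of the chain complex (\ref{chain_complex}), observes that for a generic $L$ each coordinate subspace belonging to a $j$-element set of parties meets $L$ in dimension $N\max(0,j+k-l)$, and evaluates the alternating sum $\sum_{j=l-k+1}^l(-1)^j{l\choose j}(j+k-l)=(-1)^{l-k+1}\chi_k$. You actually carry out the binomial algebra that the paper leaves to the reader, and your identities and the final value ${l-2\choose k-1}$ check out, including the degenerate case $k=0$.

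As a proof of the theorem as stated, however, the proposal has two genuine gaps. First, statements 1--4 are assumed rather than proved. For statements 1 and 3 this is venial, since they are direct consequences of the point counts you allude to; but statements 2 and 4 do \emph{not} follow from any chain-level consideration in the degrees $j>l-k+1$, where $C^jL$ is generically of dimension of order $N$. The paper obtains them from the duality theorem of \cite{wirthmueller}, which identifies $H^jL$ with the dual of $H^{l-j+2}L^\perp$ and thereby transports the vanishing results from low degrees (applied to $L^\perp$) to high degrees of $L$. Your proposal never mentions this duality, and without it the reduction of all homology to the single degree $l-k+1$ --- on which your whole Euler-characteristic argument rests --- is unavailable. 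Second, the convergence $\frac1N\dim(L\cap V_S)\to\max(0,|S|-(l-k))$ in the supercritical range $|S|+k>l$ is asserted with only a heuristic (bounded expected excess plus Markov); the paper proves the stronger statement that the excess is nonzero with probability $O(q^{-|l-k-j|N})$ (Proposition \ref{asymptotics_H}), via the explicit Schubert-cell counts of Propositions \ref{formula_G}--\ref{formula_H}. Your instinct that this is where the real work lies is correct, but that work is not done in the proposal.
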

\begin{proof}
Statements 1 and 3 will be immediate corollaries to the more detailed results of Proposition \ref{asymptotics_H}. Furthermore $C^jL=0$ trivially implies $H^jL=0$, and by the duality theorem \cite{wirthmueller}\ts Theorem 5 the group $H^jL$ is the dual of $H^{l-j+2}L^\perp$\ts: thus statement 2 follows from the first, and statement 4 from the third. Finally for each $j$-dimensional coordinate subspace of $\bb F^l$ the intersection of the corresponding coordinate subspace of $\bb F^{Nl}$ with a generic $L$ has dimension $(j\!+\!k\!-\!l)N$ if $j\!+\!k\ge l$, again by Proposition \ref{asymptotics_H}. Therefore
$\sum_{j=l-k+1}^l{(-1)}^j{l\choose j}(j\!+\!k\!-\!l)$
is the expected normalised Euler number of the chain complex (\ref{chain_complex}), and a fortiori that of its cohomology. Using standard identities for binomial coefficients \cite{knuth} the alternating sum is evaluated to ${(-1)}^{l-k+1}\chi_k$.
\end{proof}
\vspace*{12pt}

We now turn to general pure stabilizer states where $L\subset\bb F^{2Nl}$ may be  an arbitrary Lagrangian subspace of $\bb F^{2Nl}$. We let $\eu L(N):=\lgrass{(Nl)}$ be the Grassmannian variety of these subspaces.

\vspace*{12pt}
\begin{thm}\label{lag_asymptotics}
Let $L\in\eu L(N)$ be a sample.
\begin{enumerate}
  \item If $j<l/2$ then $\lim_{N\to\infty}P(C^jL\neq0)=0$.
  \item If $j<l/2$ or $j>l/2\!+\!1$ then
    $\lim_{N\to\infty}P(H^jL\neq0)=0$.
\end{enumerate}
Let any $\epsilon>0$ be given and assume that $l$ is even. Then
\begin{enumerate}
  \setcounter{enumi}{2}
  \item $\lim_{N\to\infty}P({1\over N}\dim C^{l/2}L\ge\epsilon)=0$, and
  \vskip 3pt
  \item $\lim_{N\to\infty}
    P({1\over N}\dim H^{l/2}L+{1\over N}\dim H^{l/2+2}L\ge\epsilon)=0$.
\end{enumerate}
Under the same assumptions let
$\chi=2\chi_{l/2}=2\disp{l-2\choose l/2-1}$
be the absolute value of the expected normalised Euler number of $L$. Then
\begin{enumerate}
  \setcounter{enumi}{4}
  \item $\lim_{N\to\infty}P\left({1\over N}\dim H^jL\ge\epsilon\right)=0$
    if $j\neq l/2\!+\!1$, and
    \begin{displaymath}
      \lim_{N\to\infty}
        P\left(\bigl|{\textstyle{1\over N}}\dim H^{l/2+1}L-\chi\bigr|
        \ge\epsilon\right)=0.
    \end{displaymath}
\end{enumerate}
\end{thm}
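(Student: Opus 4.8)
The plan is to follow the architecture of the proof of Theorem \ref{css_qual}, replacing the type $A$ Bruhat decomposition of $\SL$ by the type $C$ one of $\Sp(2Nl,\bb F)$. For each party-subset $S$ write $V_S$ for the coordinate symplectic subspace, $\dim V_S=2N|S|$, so that $C^jL=\bigoplus_{|S|=j}(L\cap V_S)$. The heart of the matter is the Lagrangian analogue of Proposition \ref{asymptotics_H}, describing the distribution of $\dim(L\cap V_S)$ for random $L$. Since the stabiliser of $V_S$ in $\Sp$ is $\Sp(V_S)\times\Sp(V_{S^c})$, that distribution is read off from the corresponding Schubert cells, indexed by $d=\dim(L\cap V_S)$. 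The generic value is $d_0=\max(0,N(2|S|-l))$; on the big cell, where $L$ is the graph of a symmetric matrix $M$, one has $L\cap V_S\cong\ker M_{S^c,S}$ and $d_0$ is the generic corank of that off-diagonal block. A dimension count of the incidence variety of lines $\ell\subset V_S$ lying on $L$ shows that the locus $\{d>d_0\}$ has codimension $N\bigl|2|S|-l\bigr|+O(1)$ in $\eu L(N)$; crucially this grows with $N$ whenever $|S|\neq l/2$, so over the fixed field $P(d>d_0)=O(q^{-N})$.

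Statements 1 and 3 then follow at once. For $j<l/2$ every $S$ has $d_0=0$ with codimension $\ge N$, so a union bound over the finitely many $S$ gives $C^jL=0$ with probability $1-O(q^{-N})$. For $|S|=l/2$ the cell codimensions drop to order $1$, but the corank of the now square block $M_{S^c,S}$ is $O(1)$ in distribution, whence $\frac1N\dim C^{l/2}L\to0$. The duality theorem \cite{wirthmueller}\,Theorem 5 reads $H^jL\cong(H^{l-j+2}L)^*$ here, a Lagrangian being its own symplectic orthogonal; this self-duality about $j=l/2+1$ turns statement 1 into the $j\le l/2-1$ half of statement 2, and together with the trivial bound $\dim H^{l/2}L\le\dim C^{l/2}L$ it yields statement 4.

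The only assertion not covered by this dimension count is the vanishing with high probability at the upper boundary $j=l/2+2$ of statement 2, equivalently $H^{l/2}L=0$. As $C^{l/2-1}L=0$ one has $H^{l/2}L=\ker\delta_{l/2}$, so it suffices that $\delta_{l/2}$ be injective. For each $T$ with $|T|=l/2+1$ the relevant cocycle condition lives inside the single large space $L\cap V_T$ (of dimension $\approx2N$) and couples the borderline spaces $L\cap V_S$, $S\subset T$, whose total dimension is only $O(1)$. Generic $O(1)$-dimensional subspaces of a $2N$-dimensional space form a direct sum, and the failure of directness is cut out by equations of codimension growing like $N$; hence with probability $1-O(q^{-cN})$ these sums are direct for every such $T$, forcing $\ker\delta_{l/2}=0$.

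Statement 5 is then the Euler-characteristic bookkeeping familiar from Theorem \ref{css_qual}. With the cohomology concentrated in degree $l/2+1$, the normalised rank of $H^{l/2+1}L$ equals the absolute value of $\frac1N\sum_S(-1)^{|S|}\dim(L\cap V_S)$, whose limit is $\sum_{j=l/2+1}^l(-1)^j{l\choose j}(2j-l)$; the antisymmetry $j\mapsto l-j$ evaluates this to $(-1)^{l/2+1}\cdot2{l-2\choose l/2-1}$, the asserted $\chi$. I expect the first step to be the main obstacle: working over a fixed $\bb F$ forbids reading off the dominant stratum from dimensions alone, so the codimension estimates above must be upgraded to honest Schubert-cell cardinalities ($q$-binomial coefficients) for the Lagrangian Grassmannian, and the square borderline $|S|=l/2$ --- where the codimension is only $O(1)$ and the corank distribution of a random square matrix over $\bb F$ enters --- is where these estimates are most delicate.
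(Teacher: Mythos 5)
Your architecture is the paper's: Schubert cells of the type $C$ Bruhat decomposition, a union bound over the finitely many coordinate subspaces, duality to transfer statements 1 and 3 to statements 2 and 4, and the Euler-characteristic bookkeeping for statement 5 (your evaluation of the alternating sum to $2{l-2\choose l/2-1}$ is correct, being twice the $k=l/2$ case of Theorem \ref{css_qual}). But the step you yourself flag as ``the main obstacle'' is not a refinement to be added later --- it \emph{is} the proof. Over a fixed $\bb F_q$ a codimension count does not bound a proportion of points: the bad locus here is a union of Schubert cells whose \emph{number} grows with $N$ (order $2^{Nl}$ for $\lgrass(2Nl)$), so ``codimension $\ge cN$'' alone gives only $2^{Nl}q^{-cN}$, which need not tend to $0$. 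The paper instead computes exact cardinalities $K_{lj}$, $J_{lj}$, $M_{ljs}$ (Propositions \ref{formula_K}--\ref{formula_J}) via recursions on the cells and then extracts the explicit bounds $q^{-|l-2j|N}$ and $q^{-N^2\epsilon^2}$ of Proposition \ref{asymptotics_M} by geometric-series estimates on the product formulas. Likewise your big-cell picture ($L$ the graph of a symmetric $M$, $L\cap V_S\cong\ker M_{S^c,S}$) is correct but the big cell carries only a constant fraction $\prod_i(1+q^{-i})^{-1}<1$ of the measure, so it cannot by itself yield a probability tending to $1$; the paper's Proposition \ref{formula_J_K} shows the relevant proportion is the \emph{same on every cell}, which is what makes the reduction legitimate. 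So the proposal is the right plan with its quantitative core left open.

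On the other hand, your third paragraph adds something the paper's one-line proof does not visibly supply. The paper derives statement 2 from statement 1 by the duality $H^jL\cong(H^{l-j+2}L)^*$ with $L^\perp=L$; but for the boundary value $j=l/2+2$ the dual index is $l-j+2=l/2$, which lies outside the range of statement 1, so duality alone does not give the vanishing of $H^{l/2+2}L$ (equivalently of $H^{l/2}L=\ker\delta_{l/2}$). Your argument --- that for each $(l/2+1)$-set $T$ the $O(1)$-dimensional spaces $L\cap V_S$, $S\subset T$, generically form a direct sum inside the roughly $2N$-dimensional $L\cap V_T$, forcing $\ker\delta_{l/2}=0$ --- is exactly the kind of supplement needed there (and it too must ultimately be backed by a point count rather than a codimension count). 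This is a genuine improvement in explicitness over the paper's ``in the same way'' at that one index.
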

\begin{proof}
This follows from Proposition \ref{asymptotics_M} in the same way as Theorem \ref{css_qual} follows from Proposition \ref{asymptotics_H}.
\end{proof}

\noindent
\textit{Remarks.} We do not know whether an analogue of statement 5 holds for odd dimension $l$. ---  As mentioned in the introduction the rank of $H^2(|\psi\rangle)$ is the number of all-party GHZ states that may be extracted from $|\psi\rangle$. By statement 2 of Theorem \ref{lag_asymptotics} this number is zero for $l>4$ and large $N$, with arbitrarily high probability. In fact in the binary case $q=2$ the more detailed results of Proposition \ref{asymptotics_M} imply the bound of $(1\!+\!\log2)\,2^{-(l-4)N}$ for the probability that a GHZ state may be extracted from $|\psi\rangle$. In the case of $l=4$ the number of extractable all-party GHZ states is still likely to be arbitrarily small compared to $N$, by the third statement of Theorem \ref{lag_asymptotics}. Again Proposition \ref{asymptotics_M} explicitly bounds the probability of the event that $N\mskip-0.7\thinmuskip\epsilon$ or more GHZ states may be extracted from $|\psi\rangle$. In the binary case the bound is proportional to $2^{-N^2\epsilon^2}$ and thus improves that given in \cite{smith2} for this particular question, which is exponential in $N$ rather than $N^2$.
\vspace*{12pt}

Theorem \ref{lag_asymptotics} may be read as a statement about the typical entanglement of an $l$-party stabilizer state\ts: all such entanglement comes from the entanglement of the $(l/2\!+\!1)$-party subsets, and it is quantified by the value given in the fifth statement. As to CSS states their typical entanglement as described by Theorem \ref{css_qual} reflects the way their Lagrangian is built from two subspaces $L,L^\perp\subset\bb F^l$ of complementary dimension. Remarkably, in the symmetric case $\dim L=l/2$ their entanglement is the same as that of a general stabilizer state. In physical language this occurs whenever the stabilizer group has the same number of Pauli $X$ and $Z$ generators.
\section{Proofs}
\label{pro}
\noindent
We begin by recalling the well-known partition of the Grassmannian $\grass{(l,k)}$ into Schubert cells. The latter are indexed by right congruence classes of permutations $\sigma\in\Sym_l/(\Sym_k\times\Sym_{l-k})$. More precisely, given any $\sigma\in\Sym_l$ the corresponding Schubert cell $C_\sigma$ comprises those subspaces of $\bb F^l$ which are spanned by the columns of some unipotent upper triangular matrix $u\in\GL(l,\bb F)$, with column index in $\{\sigma1,\sigma2,\dots,\sigma k\}$. Explicitly, if $\sigma$ is chosen in its congruence class such that $\sigma1<\sigma2<\cdots<\sigma k$ then a subspace belongs to $C_\sigma$ if and only if it is the image subspace of some matrix
\begin{displaymath}
  \left\lgroup
    \begin{array}{ccccc}
      \ast              &\ast   &\cdots     &\cdots &\ast   \cr
    \noalign{\vskip-7pt}
      \vdots            &\vdots &           &       &\vdots \cr
    \noalign{\vskip-4pt}
      \ast              &\ast   &\cdots     &\cdots &\ast   \cr
      1                 &0      &\cdots     &\cdots &0      \cr
                        &\ast   &\cdots     &\cdots &\ast   \cr
    \noalign{\vskip-7pt}
                        &\vdots &           &       &\vdots \cr
    \noalign{\vskip-4pt}
                        &\ast   &\cdots     &\cdots &\ast   \cr
                        &1      &0\;\,\cdot &\cdots &0      \cr
                        &       &           &       &\ast   \cr
    \noalign{\vskip-7pt}
                        &       &\ddots     &       &\vdots \cr
    \noalign{\vskip-4pt}
                        &       &           &       &\ast   \cr
                        &       &           &       &1      \cr
      \vbox{\vskip20pt} &       &           &       &       \cr
    \end{array}
  \right\rgroup\in\Mat{(l\times k,\bb F)}
\end{displaymath}
where the special rows are those with indices $\sigma1,\dots,\sigma k$, and the entries at the starred positions are arbitrary. It follows that $C_\sigma$ is an affine space of dimension
\begin{displaymath}
  d_\sigma=
    \left|\left\{(a,b)\in\{1,\dots,k\}\by\{k\!+\!1,\dots,l\}\,\big|\,
            \sigma a>\sigma b\right\}\right|.
\end{displaymath}
From this fact the number of points
\begin{displaymath}
  G_{lk}:=|\grass{(l,k)}|
\end{displaymath}
is easily computed\ts:

\vspace*{12pt}
\begin{prop}\label{formula_G}
  $G_{lk}=\disp{\prod_{i=l-k+1}^l(q^i-1)\over\prod_{i=1}^k(q^i-1)}$\ts.
\end{prop}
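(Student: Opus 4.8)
The plan is to read the point count directly off the cell decomposition just described. Since $\bb F$ has $q$ elements, each Schubert cell $C_\sigma$, being an affine space of dimension $d_\sigma$, contains exactly $q^{d_\sigma}$ points; as the cells partition $\grass{(l,k)}$ this gives
\[
  G_{lk}=\sum_\sigma q^{d_\sigma},
\]
the sum running over the ${l\choose k}$ congruence classes in $\Sym_l/(\Sym_k\by\Sym_{l-k})$. Choosing in each class the representative with $\sigma1<\cdots<\sigma k$, I would first rewrite the exponent in the transparent form $d_\sigma=\sum_{a=1}^k(\sigma a-a)$. Indeed, taking the complementary values $\sigma(k\!+\!1)<\cdots<\sigma l$ increasing as well, an inversion $(a,b)$ counted by $d_\sigma$ amounts to a non-pivot index below $\sigma a$; among the $\sigma a-1$ indices below $\sigma a$ exactly $a-1$ are pivots, leaving $\sigma a-a$ of them.

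With this description the sum is the Gaussian binomial coefficient, and I would establish the closed product formula by a recursion on $l$. Splitting the classes according to whether or not the largest index $l$ occurs among the pivots $\sigma1,\dots,\sigma k$, the first alternative reproduces the classes for $\grass{(l\!-\!1,k)}$ with unchanged dimensions, while in the second $\sigma k=l$ contributes an extra summand $l\!-\!k$ to $d_\sigma$ and leaves a pivot set for $\grass{(l\!-\!1,k\!-\!1)}$. Hence
\[
  G_{lk}=G_{(l-1)k}+q^{l-k}\,G_{(l-1)(k-1)},
\]
with the evident base cases $G_{l0}=G_{ll}=1$, both matching the empty, respectively total, products.

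It then remains to verify that the asserted product satisfies the same recursion and base cases; this is the only genuine computation, and it is elementary. Factoring the common quotient $\prod_{i=l-k+1}^{l-1}(q^i\!-\!1)\big/\prod_{i=1}^{k-1}(q^i\!-\!1)$ out of the two terms on the right reduces the identity to
\[
  \frac{q^{l-k}-1}{q^k-1}+q^{l-k}=\frac{q^l-1}{q^k-1},
\]
which is immediate. I do not expect any step here to pose a serious obstacle; the only mild care required is in matching the index ranges of the products as $l$ and $k$ drop by one across the recursion.
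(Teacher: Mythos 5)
Your proof is correct and follows essentially the same route as the paper: both derive the recurrence $G_{lk}=q^{l-k}G_{l-1,k-1}+G_{l-1,k}$ by splitting the Schubert cells according to whether $l$ lies in the pivot set, and then verify the product formula by induction. You merely spell out the intermediate steps (the identity $d_\sigma=\sum_a(\sigma a-a)$ and the algebraic check of the recursion) that the paper leaves implicit.
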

\begin{proof}
Clearly $G_{l0}=G_{ll}=1$. For $0<k<l$ the set $\sigma\{1,\dots,k\}$ may or may not contain $l$, and each case corresponds to a summand in the recurrence relation
\begin{displaymath}
  G_{lk}=q^{l-k}G_{l-1,k-1}+G_{l-1,k}.
\end{displaymath}
The formula now follows by induction.
\end{proof}
\vspace*{12pt}

We let $0=\bb F^0\subset\bb F^1\subset\cdots\subset\bb F^{l-1}\subset\bb F^l$ be the standard flag formed by the coordinate subspaces with vanishing \textit{last} components. Then for any $L\in C_\sigma$ we have
\begin{equation}\label{dim_cap}
  \dim L\cap\bb F^j=k-\left|\left\{a\in\{1,\dots,k\}\,\big|\,
                                   \sigma a>j\right\}\right|.
\end{equation}
In particular the collection of these dimensions is an alternative way to characterise the cell $C_\sigma$.

If $j\!+\!k\le l$ then the intersection $L\cap\bb F^j$ generically is zero, and we determine
\begin{equation}\label{def_F}
  F_{lkj}:=\left|\left\{L\in\grass{(l,k)}\,\big|\,L\cap\bb F^j=0\right\}\right|
\end{equation}
in this case.

\vspace*{12pt}
\begin{prop}\label{formula_F}
For $j\!+\!k\le l$ one has
  $F_{lkj}
    =\disp q^{kj}\cdot
       {\prod_{i=l-k-j+1}^{l-j}(q^i-1)\over\prod_{i=1}^k(q^i-1)}$\ts.
\end{prop}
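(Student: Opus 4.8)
The plan is to reduce the count directly to Proposition \ref{formula_G} by means of a projection. The condition $L\cap\bb F^j=0$ says precisely that the projection $\pi\from\bb F^l\too\bb F^l/\bb F^j$ restricts to an injection on $L$, so the assignment $L\mapstoo\pi(L)$ sends every admissible $L$ to a genuinely $k$-dimensional subspace $V$ of the quotient $\bb F^l/\bb F^j\cong\bb F^{l-j}$. First I would fix the obvious linear section $s\from\bb F^{l-j}\too\bb F^l$ sending each of the last $l\!-\!j$ coordinate vectors to itself, and claim that for fixed $V\subset\bb F^{l-j}$ the admissible subspaces $L$ with $\pi(L)=V$ are exactly the graphs $L_\phi=\{s(v)+\phi(v)\mid v\in V\}$ of the linear maps $\phi\from V\too\bb F^j$.

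The heart of the argument is that $\phi\mapstoo L_\phi$ is a bijection onto this fibre. On the one hand each $L_\phi$ has dimension $k$ and satisfies $L_\phi\cap\bb F^j=0$, since an element $s(v)+\phi(v)$ lies in $\bb F^j=\ker\pi$ only when $v=\pi\bigl(s(v)+\phi(v)\bigr)=0$; distinct $\phi$ plainly give distinct graphs. On the other hand, given any admissible $L$ with $\pi(L)=V$, the injectivity of $\pi|_L$ provides for each $v\in V$ a unique $\ell_v\in L$ with $\pi(\ell_v)=v$, and setting $\phi(v):=\ell_v-s(v)\in\ker\pi=\bb F^j$ yields a linear map with $L=L_\phi$. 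Hence each fibre is in bijection with $\Hom(V,\bb F^j)$ and so carries exactly $q^{kj}$ subspaces.

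Summing over the base, the number of admissible $V$ is $G_{l-j,k}$ by its very definition, whence $F_{lkj}=q^{kj}\,G_{l-j,k}$. Substituting the closed form of Proposition \ref{formula_G} and observing that $l\!-\!j\!-\!k\!+\!1=l\!-\!k\!-\!j\!+\!1$, so that the numerator $\prod_{i=l-j-k+1}^{l-j}(q^i-1)$ is literally $\prod_{i=l-k-j+1}^{l-j}(q^i-1)$, then produces the asserted formula with no further simplification.

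I expect the only genuine obstacle to be the bijection of the middle paragraph, that is, confirming that the fibres of $L\mapstoo\pi(L)$ are all in bijection with $\Hom(V,\bb F^j)$ and in particular of the same cardinality $q^{kj}$ independent of $V$; everything else is bookkeeping. As a consistency check one could instead sum $q^{d_\sigma}$ over the Schubert cells satisfying the constraint, which by (\ref{dim_cap}) are exactly those with $\{\sigma1,\dots,\sigma k\}\subseteq\{j\!+\!1,\dots,l\}$\ts: writing $\sigma a=j+t_a$ splits the inversion count as $d_\sigma=kj+\sum_a(t_a-a)$ and recovers the same value $q^{kj}G_{l-j,k}$.
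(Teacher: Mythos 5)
Your argument is correct, but it is genuinely different from the one in the paper. The paper stays inside the Schubert-cell bookkeeping: it observes that $L\cap\bb F^j=0$ singles out the cells $C_\sigma$ with $\sigma\{1,\dots,k\}\subseteq\{j\!+\!1,\dots,l\}$ and then derives the recurrence $F_{lkj}=q^{l-k}F_{l-1,k-1,j}+F_{l-1,k,j}$ (according to whether $l\in\sigma\{1,\dots,k\}$), proving the closed form by induction with the boundary values $F_{l0j}=1$ and $F_{lk0}=G_{lk}$ --- exactly parallel to the proof of Proposition \ref{formula_G}. You instead exhibit the set being counted as a fibration over $\grass(\bb F^{l-j},k)$ with fibres $\Hom(V,\bb F^j)$, i.e.\ as the ``big cell'' relative to $\bb F^j$, obtaining the intermediate identity $F_{lkj}=q^{kj}\,G_{l-j,k}$ directly and without induction. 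Your graph bijection is complete as stated (injectivity, surjectivity, and the fact that every $V$ occurs, via $\phi=0$, are all covered), and the hypothesis $j\!+\!k\le l$ enters only to ensure $k\le l\!-\!j$ so that $G_{l-j,k}$ counts a nonempty Grassmannian. What your route buys is conceptual transparency --- it explains the factor $q^{kj}$ and makes the statement an instance of Proposition \ref{formula_G} rather than a sibling of it; what the paper's route buys is uniformity, since the identical recurrence is reused for $G_{lk}$ and $F_{lkj}$ and the same style of cell-counting recurs in the Lagrangian case (Propositions \ref{formula_L} and \ref{formula_K}). Your closing consistency check, summing $q^{d_\sigma}$ over the admissible cells, is also sound and is essentially a non-inductive version of the paper's own computation.
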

\begin{proof}
From (\ref{dim_cap}) we know that $L\cap\bb F^j=0$ occurs if and only if $L$ belongs to a cell $C_\sigma$ such that $\sigma$ sends $\{1,\dots,k\}$ into $\{j\!+\!1,\dots,l\}$. For $0<k<l$ the set $\sigma\{1,\dots,k\}$ may or may not contain $l$\ts: this gives the recurrence relation
\begin{displaymath}
  F_{lkj}=q^{l-k}F_{l-1,k-1,j}+F_{l-1,k,j},
\end{displaymath}
which together with $F_{l0j}=1$ and $F_{lk0}=G_{lk}$ allows to prove the formula by induction.
\end{proof}
\vspace*{12pt}

For $j\!+\!k\ge l$ the dimension of  $L\cap\bb F^j$ is at least (and generically equal to) $j\!+\!k\!-\!l$, and we extend (\ref{def_F}) to this case putting
\begin{displaymath}
  F_{lkj}:=\left|\left\{L\in\grass{(l,k)}\,\big|\,
                        \dim L\cap\bb F^j=j\!+\!k\!-\!l\right\}\right|
           \q(j\!+\!k\ge l).
\end{displaymath}
The analogue of Proposition \ref{formula_F} is

\vspace*{12pt}
\begin{prop}\label{formula_F_next}
For $j\!+\!k\ge l$ one has
  $F_{lkj}
    =\disp q^{(l-j)(l-k)}\cdot
    {\prod_{i=j+k-l+1}^j(q^i-1)\over\prod_{i=1}^{l-k}(q^i-1)}$\ts.
\end{prop}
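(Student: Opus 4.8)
The plan is to deduce Proposition \ref{formula_F_next} from Proposition \ref{formula_F} by a duality argument, which I expect to be both shorter and more transparent than setting up a fresh recurrence. The key observation is that, for a $k$-dimensional subspace $L$ with $j\!+\!k\ge l$, the generic condition $\dim L\cap\bb F^j=j\!+\!k\!-\!l$ is equivalent to $L+\bb F^j=\bb F^l$. This is immediate from the dimension formula $\dim(L+\bb F^j)=k+j-\dim(L\cap\bb F^j)$, since $L+\bb F^j=\bb F^l$ is precisely the statement that the right-hand side equals $l$.

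Passing to annihilators in the dual space $(\bb F^l)^*$ and using the identity $\Ann(A+B)=\Ann A\cap\Ann B$, the condition becomes $\Ann L\cap\Ann\bb F^j=0$. As $L$ runs through $\grass{(l,k)}$ the annihilator $\Ann L$ runs bijectively through all $(l\!-\!k)$-dimensional subspaces of $(\bb F^l)^*$, while $\Ann\bb F^j$ is the $(l\!-\!j)$-dimensional coordinate subspace spanned by the last $l\!-\!j$ dual basis vectors. Reversing the order of the coordinates is an automorphism of $(\bb F^l)^*$ that carries $\Ann\bb F^j$ to the standard flag piece $\bb F^{l-j}$ and preserves all the cardinalities involved. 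Hence $F_{lkj}$ equals the number of $(l\!-\!k)$-dimensional subspaces meeting the standard $\bb F^{l-j}$ trivially.

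That number is exactly the quantity computed in Proposition \ref{formula_F}, evaluated with the triple $(l,l\!-\!k,l\!-\!j)$ in place of $(l,k,j)$. Its hypothesis $(l\!-\!j)+(l\!-\!k)\le l$ coincides with the standing assumption $j\!+\!k\ge l$, so the proposition applies. Substituting $k'=l\!-\!k$ and $j'=l\!-\!j$, the prefactor $q^{k'j'}$ becomes $q^{(l-j)(l-k)}$, the numerator product runs from $l-k'-j'+1=j\!+\!k\!-\!l\!+\!1$ up to $l-j'=j$, and the denominator becomes $\prod_{i=1}^{l-k}(q^i-1)$; this is precisely the asserted formula.

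The one conceptual step --- recognising the duality that turns the ``trivial intersection'' count of Proposition \ref{formula_F} into the present ``generic intersection'' count --- is the crux, and I expect the only real care to be required in checking that the annihilator of the standard flag becomes the standard flag after coordinate reversal, so that Proposition \ref{formula_F} may be quoted verbatim with the index ranges matching up exactly. Should one wish to avoid the duality altogether, the formula can instead be obtained directly from the Schubert-cell description: by (\ref{dim_cap}) the event $\dim L\cap\bb F^j=j\!+\!k\!-\!l$ amounts to $\sigma\{1,\dots,k\}\supseteq\{j\!+\!1,\dots,l\}$, and an induction analogous to that of Proposition \ref{formula_F} --- indeed its dual, with boundary values $F_{llj}=1$ and $F_{lkl}=G_{lk}$ --- would yield the same result.
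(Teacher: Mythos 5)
Your proof is correct and takes essentially the same route as the paper, whose entire proof is the one-line observation that $\dim L\cap\bb F^j=j\!+\!k\!-\!l$ if and only if $L^\perp\cap{(\bb F^j)}^\perp=0$, reducing to Proposition \ref{formula_F} with $(l,l\!-\!k,l\!-\!j)$ in place of $(l,k,j)$. Your version merely spells out the duality via annihilators and the (harmless, since the count depends only on dimensions) coordinate reversal, plus the index bookkeeping that the paper leaves implicit.
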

\begin{proof}
Passing to orthogonal complements we see that
\begin{displaymath}
  \dim L\cap\bb F^j=j\!+\!k\!-\!l\hbox{\q if and only if\q}
  L^\perp\cap{(\bb F^j)}^\perp=0.
\end{displaymath}
\end{proof}
\vspace*{12pt}

Still in the case $j\!+\!k\ge l$ we now let more generally $s$ be an integer with $j\!+\!k\!-\!l\le s\le\min\{j,k\}$, and calculate the cardinality
\begin{displaymath}
  H_{lkjs}:=\left|\left\{L\in\grass{(l,k)}\,\big|\,
                         \dim L\cap\bb F^j=s\right\}\right|.
\end{displaymath}

\vspace*{12pt}
\begin{prop}\label{formula_H}
  $H_{lkjs}=G_{js}\cdot F_{l-s,k-s,j-s}$
  for $0\le j\!+\!k\!-\!l\le s\le\min\{j,k\}$.
\end{prop}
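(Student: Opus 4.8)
The plan is to stratify the set being counted according to the intersection $W:=L\cap\bb F^j$ itself, and then to count the admissible $L$ lying over each such $W$ by passing to a quotient. Every $L$ contributing to $H_{lkjs}$ determines an $s$-dimensional subspace $W=L\cap\bb F^j$ of $\bb F^j$; conversely each such $L$ contains $W$ and satisfies $L\cap\bb F^j=W$ exactly. The constraints $s\le\min\{j,k\}$ and $s\ge j\!+\!k\!-\!l$ are forced respectively by $W\subseteq L\cap\bb F^j$ and by the dimension estimate $\dim(L\cap\bb F^j)\ge k\!+\!j\!-\!l$, so they are automatic. The number of possible $W$ is precisely $G_{js}=|\grass{(j,s)}|$, and it remains to show that for each fixed $W$ the number of $L\in\grass{(l,k)}$ with $W\subseteq L$ and $L\cap\bb F^j=W$ equals $F_{l-s,k-s,j-s}$, independently of $W$.

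First I would pass to the quotient $\bar V:=\bb F^l/W$, which has dimension $l\!-\!s$. The subspaces $L$ containing $W$ correspond bijectively to subspaces $\bar L:=L/W$ of $\bar V$, with $\dim\bar L=k\!-\!s$, and the condition $L\cap\bb F^j=W$ translates into $\bar L\cap(\bb F^j/W)=0$, where $\bb F^j/W\subseteq\bar V$ has dimension $j\!-\!s$. Thus the fibre count becomes the number of $(k\!-\!s)$-dimensional subspaces of the $(l\!-\!s)$-dimensional space $\bar V$ meeting a fixed $(j\!-\!s)$-dimensional subspace trivially.

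The final point is to identify this number with $F_{l-s,k-s,j-s}$. Although $F$ was defined via the standard coordinate subspace $\bb F^{j-s}\subset\bb F^{l-s}$, the count of subspaces of a given dimension meeting a fixed subspace trivially depends only on the dimensions involved: any linear isomorphism carrying the pair $(\bar V,\bb F^j/W)$ to the standard pair $(\bb F^{l-s},\bb F^{j-s})$ induces a bijection of Grassmannians preserving the trivial-intersection condition. Since $(j\!-\!s)\!+\!(k\!-\!s)\le l\!-\!s$ holds exactly because $s\ge j\!+\!k\!-\!l$, we are in the range of Proposition \ref{formula_F}, and the fibre count is $F_{l-s,k-s,j-s}$. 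Summing over the $G_{js}$ choices of $W$ gives the formula. I expect the main obstacle to be precisely this invariance step --- verifying that the per-fibre count is genuinely independent of $W$ and coincides with the intrinsically defined $F_{l-s,k-s,j-s}$ --- which rests on the homogeneity of the Grassmannian under $\GL(l,\bb F)$, equivalently the transitivity of the parabolic stabiliser of $\bb F^j$ on the $s$-dimensional subspaces of $\bb F^j$; the remaining dimension bookkeeping through the quotient is routine but must be tracked carefully to land in the correct index range for $F$.
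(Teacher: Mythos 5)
Your proposal is correct and follows essentially the same route as the paper: count the $G_{js}$ possible intersections $W=L\cap\bb F^j$, then for each fixed $W$ pass to the quotient $\bb F^l/W$ to identify the fibre count with $F_{l-s,k-s,j-s}$. The invariance step you flag is indeed the only point the paper leaves implicit, and your appeal to the transitivity of the stabiliser of $\bb F^j$ settles it.
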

\begin{proof}
The first factor counts the possible intersections $L':=L\cap\bb F^j$, and the second the possibilities to realise a fixed $L'$\ts: to this end consider the factor spaces $L/L'\in\grass{(L/L',k\!-\!s)}$ with $L/L'\cap\bb F^j/L'=0$.
\end{proof}
\vspace*{12pt}

We are now ready to study the asymptotic behaviour of the entities introduced so far. From the point of view of algebraic geometry the spaces $L\in\grass{(l,k)}$ that have a non-generic, that is larger than expected intersection with some $\bb F^j$ are points of a proper algebraic subvariety of $\grass{(l,k)}$. This implies that the proportion $F_{lkj}:G_{lk}$ approaches $1$ when the ground field $\bb F_q$ is replaced by a large finite extension field. While in our context the ground field is fixed, and rather the dimensions $l$, $k$, and $j$ are scaled by a large common factor $N$ the proportion in question still behaves by and large the same way. The precise result is this\ts:

\vspace*{12pt}
\begin{prop}\label{asymptotics_H}
Let $l$, $k$, and $j$ be fixed.
\begin{enumerate}
  \item If $j\!+\!k\neq l$ then
    \begin{displaymath}
      0\le1-{F_{Nl,Nk,Nj}\over G_{Nl,Nk}}\le
      {|\log(1-q\inv)|\over1-q\inv}\cdot q^{-|l-k-j|N}
    \end{displaymath}
    for all $N\in\bb N$.
    \smallskip
    \item If $j\!+\!k=l$ let any real $\epsilon>0$ be given and put
    \begin{displaymath}
      S_N:=
        \left|\left\{L\in\grass{(Nl,Nk)}\,\big|\,
          \dim L\cap\bb F^{Nj}\ge N\epsilon
        \right\}\right|.
    \end{displaymath}
    Then
    \begin{displaymath}
      {S_N\over G_{Nl,Nk}}\le e^{{2q\over q-1}|\log(1-q\inv)|}
        \cdot\textstyle{{(q^2\!-\!1)}^2\over{(q^2\!-\!1)}^2-q}
        \cdot q^{-N^2\epsilon^2}
    \end{displaymath}
    for all $N\in\bb N$.
\end{enumerate}
\end{prop}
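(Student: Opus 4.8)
The plan is to reduce both statements to elementary estimates on finite products of the numbers $1-q\inv$, $1-q^{-2},\dots$, after rewriting the relevant cardinality ratios by means of the closed formulas of Propositions \ref{formula_G}, \ref{formula_F}, \ref{formula_F_next} and \ref{formula_H}. Throughout I would write each factor as $q^i-1=q^i(1-q^{-i})$, so that the powers of $q$ can be separated from the ``convergent'' factors $1-q^{-i}$.

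For statement 1 I would combine Proposition \ref{formula_F} (if $j\!+\!k<l$) or Proposition \ref{formula_F_next} (if $j\!+\!k>l$) with Proposition \ref{formula_G}. In either case the explicit prefactor power of $q$ cancels exactly against the shift, and the quotient collapses to a product of the shape
\begin{displaymath}
  {F_{Nl,Nk,Nj}\over G_{Nl,Nk}}=\prod_i{1-q^{-i}\over1-q^{-(i+d)}}
\end{displaymath}
with a positive shift $d$ (equal to $Nj$ resp.\ $N(l\!-\!j)$), the $Nk$ resp.\ $N(l\!-\!k)$ consecutive indices $i$ starting at $N|l\!-\!k\!-\!j|\!+\!1$. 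Each factor lies in $(0,1]$, which yields the lower bound $0$ at once. For the upper bound I would pass to logarithms, use $1-x\le-\log x$ to obtain $1-F/G\le-\log(F/G)=\sum_i(-\log r_i)$, and bound each summand by $|\log(1-q^{-i})|$ since the shifted factor can only help. The resulting tail $\sum_{i\ge m}|\log(1-q^{-i})|$ with $m=N|l-k-j|+1$ is finally controlled through the elementary convexity inequality $(1-xy)\ge(1-x)^y$ (for $x\in(0,1)$, $y\in(0,1]$), applied twice; this is precisely what produces the constant $|\log(1-q\inv)|/(1-q\inv)$ together with the decay $q^{-|l-k-j|N}$.

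For statement 2 the decisive input is Proposition \ref{formula_H}, which presents $S_N$ as the sum over integers $s\ge N\epsilon$ of $H_{Nl,Nk,Nj,s}=G_{Nj,s}\,F_{Nl-s,Nk-s,Nj-s}$. Because $j\!+\!k=l$ forces the inner $F$ into the generic regime of Proposition \ref{formula_F}, inserting the three closed formulas and again splitting off the powers of $q$ should lead to the exact identity
\begin{displaymath}
  {H_{Nl,Nk,Nj,s}\over G_{Nl,Nk}}=q^{-s^2}\,R_N(s),
\end{displaymath}
where $R_N(s)$ is a ratio of finite products of factors $1-q^{-i}$. The crucial point is that the accumulated exponent of $q$ collapses to exactly $-s^2$; verifying this collapse is the main obstacle, since it rests on a delicate bookkeeping of four index ranges that only simplifies thanks to $l-k-j=0$.

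Granting the identity, two cheap estimates finish the proof. First, every factor in the numerator of $R_N(s)$ is $\le1$, while each of the two products in its denominator exceeds the Euler product $\prod_{i\ge1}(1-q^{-i})$; hence $R_N(s)\le\bigl(\prod_{i\ge1}(1-q^{-i})\bigr)^{-2}\le e^{{2q\over q-1}|\log(1-q\inv)|}$, using $\sum_{i\ge1}|\log(1-q^{-i})|\le{q\over q-1}|\log(1-q\inv)|$ from the same convexity inequality. Thus the smallest surviving term, at $s=\lceil N\epsilon\rceil$, is at most $e^{{2q\over q-1}|\log(1-q\inv)|}\,q^{-N^2\epsilon^2}$. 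Second, the explicit shape of $R_N(s)$ gives the consecutive ratio
\begin{displaymath}
  {H_{Nl,Nk,Nj,s+1}\over H_{Nl,Nk,Nj,s}}
    =q^{-(2s+1)}\,{(1-q^{-(Nj-s)})(1-q^{-(Nk-s)})\over{(1-q^{-(s+1)})}^2}
    \le{q\over{(q^2\!-\!1)}^2}
\end{displaymath}
for every $s\ge1$, so the sum over $s\ge N\epsilon$ is dominated by a geometric series of ratio $q/{(q^2\!-\!1)}^2<1$ with total factor ${(q^2\!-\!1)}^2/\bigl({(q^2\!-\!1)}^2-q\bigr)$. Multiplying the two constants by $q^{-N^2\epsilon^2}$ then yields the asserted bound.
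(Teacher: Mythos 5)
Your proposal is correct and follows essentially the same route as the paper: part 1 is the identical logarithm-plus-geometric-tail estimate (the paper writes it as $1-Q\le1-e^{-c}\le c$ rather than $1-Q\le-\log Q$, which is the same inequality), and in part 2 the exponent of $q$ does collapse to $-s^2$ exactly as you claim, with $R_N(s)\le\prod_{i=1}^s{(1-q^{-i})}^{-2}$ after the cancellations, and your consecutive-ratio bound $q/{(q^{s+1}-1)}^2\le q/{(q^2-1)}^2$ is precisely the paper's geometric-series step. The only cosmetic difference is that the paper obtains the $q^{-s^2}$ term by telescoping the ratio $H_{\sigma}/H_{\sigma-1}$ from $H_0=F_{Nl,Nk,Nj}\le G_{Nl,Nk}$ rather than by substituting the closed formulas directly.
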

\begin{proof}
We first assume that $j\!+\!k<l$ and put $m=l\!-\!k\!-\!j$. The expressions for $F_{lkj}$ and $G_{lk}$ from Propositions \ref{formula_G} and \ref{formula_F} have the same degree as rational functions in $q$, and the relevant quotient is
\begin{displaymath}
  \begin{array}{rrl}
    Q_N &:= &\disp{F_{Nl,Nk,Nj}\over G_{Nl,Nk}} \\
  \noalign{\vskip 5pt}
        &=  &\disp{\prod_{i=Nl-Nk-Nj+1}^{Nl-Nj}(1-q^{-i})
             \over
             \prod_{i=1}^{Nk}(1-q^{-i})}
             \cdot
             \disp{\prod_{i=1}^{Nk}(1-q^{-i})
             \over
             \prod_{i=Nl-Nk+1}^{Nl}(1-q^{-i})} \\
  \noalign{\vskip 5pt}
        &=  &\disp{\prod_{i=Nm+1}^{Nl-Nj}(1-q^{-i})
             \over
             \prod_{i=Nl-Nk+1}^{Nl}(1-q^{-i})}. \\
  \end{array}
\end{displaymath}
Giving away the denominator we abbreviate $u=-q\!\cdot\!\log(1\!-\!q\inv)$, and using $m>0$ further estimate
\begin{equation}
  \begin{array}{rcl}\label{geom_series}
    \log Q_N &\ge &\disp\sum_{i=Nm+1}^{Nl-Nj}\log(1-q^{-i}) \\
  \noalign{\vskip 3pt}
             &\ge &-u\cdot\!\!\disp\sum_{i=Nm+1}^{Nl-Nj}q^{-i} \\
  \noalign{\vskip 3pt}
             &\ge &-u\cdot\!\disp\sum_{i=Nm+1}^\infty q^{-i} \\
  \noalign{\vskip 3pt}
             &= &-u\cdot q^{-Nm-1}\disp{1\over 1-q\inv} \\
  \end{array}
\end{equation}
for all $N$. We conclude
\begin{displaymath}
  1-Q_N\le1-\exp\bigl(-{u\over q-1}\,q^{-mN}\bigr)
       \le {u\over q-1}\,q^{-mN}
\end{displaymath}
and need only substitute the value of $u$.

The case of $j\!+\!k>l$ is similar, and there remains that of $j\!+\!k=l$. Here Proposition \ref{formula_H} supplies the quotient
\begin{displaymath}
  \disp{H_{Nl,Nk,Nj,\sigma}\over H_{Nl,Nk,Nj,\sigma-1}}
    =q^{2\sigma-Nl-1}\cdot
       \disp{(q^{Nk-\sigma+1}-1)(q^{Nj-\sigma+1}-1)\over{(q^\sigma-1)}^2}
\end{displaymath}
for each $\sigma>0$. In view of $H_{Nl,Nk,Nj,0}=F_{Nl,Nk,Nj}$ we obtain the estimate
\begin{displaymath}
  \disp{H_{Nl,Nk,Nj,s}\over F_{Nl,Nk,Nj}}
    \le\disp{q^s\over{(q\!-\!1)}^2{(q^2\!-\!1)}^2\cdots{(q^s\!-\!1)}^2}
\end{displaymath}
and, summing up,
\begin{displaymath}
  \begin{array}{rcl}
    \disp\sum_{\sigma=s}^{\min{\{Nj,Nk\}}}
           {H_{Nl,Nk,Nj,\sigma}\over F_{Nl,Nk,Nj}}
      &\le &\disp{q^s\over{(q\!-\!1)}^2{(q^2\!-\!1)}^2\cdots{(q^s\!-\!1)}^2}
              \cdot\sum_{i=0}^\infty{q^i\over{(q^{s+1}-1)}^{2i}} \\
      &=   &\disp{q^s\over{(q\!-\!1)}^2{(q^2\!-\!1)}^2\cdots{(q^s\!-\!1)}^2}
              \cdot{{(q^{s+1}\!-\!1)}^2\over{(q^{s+1}\!-\!1)}^2-q}   \\
  \end{array}
\end{displaymath}
for all $s>0$. In the resulting estimate
\begin{displaymath}
  \begin{array}{rcl}
    \disp\sum_{\sigma\ge s}\disp{H_{Nl,Nk,Nj,\sigma}\over G_{Nl,Nk}}
      &\le &\disp{q^s\over{(q\!-\!1)}^2{(q^2\!-\!1)}^2\cdots{(q^s\!-\!1)}^2}
              \cdot{{(q^2\!-\!1)}^2\over{(q^2\!-\!1)}^2-q} \\
  \noalign{\vskip 3pt}
      &=   &\disp{1\over{(1\!-\!q^{-1})}^2{(1\!-\!q^{-2})}^2
                      \cdots{(1\!-\!q^{-s})}^2}
                      \cdot{{(q^2\!-\!1)}^2\over{(q^2\!-\!1)}^2-q}
                      \cdot q^{-s^2}                       \\
  \end{array}
\end{displaymath}
the product $(1-q^{-1})\cdot\,\cdots\,\cdot(1-q^{-s})$ is bounded below by $e^{-u/(q-1)}$ as in (\ref{geom_series}), so that
\begin{displaymath}
    \disp\sum_{\sigma\ge s}\disp{H_{Nl,Nk,Nj,\sigma}\over G_{Nl,Nk}}
      \le e^{2u/(q-1)}\cdot{{(q^2\!-\!1)}^2\over{(q^2\!-\!1)}^2-q}
                      \cdot q^{-s^2}.
\end{displaymath}
It only remains to substitute $s:=\lceil N\epsilon\rceil$.
\end{proof}

\noindent
\textit{Remark.} Statement 1 of the proposition definitely does not extend to the case $j\!+\!k=l$, for the quotient
\begin{displaymath}
  {F_{Nl,Nk,Nj}\over G_{Nl,Nk}}=\prod_{i=1}^{Nk}{1-q^{-i}\over1-q^{-Nj-i}}
\end{displaymath}
clearly cannot converge to one for $N\to\infty$ if $j$ is positive.
\vspace*{12pt}

We now turn to the symplectic vector space $\bb F^{2l}$ and let $\lgrass{(2l)}$ denote the Grassmannian of its Lagrangian subspaces. As mentioned in the introduction $\lgrass{(2l)}$ is a homogeneous space of the symplectic group $\Sp(2l,\bb F)$, which is a Chevalley group of type $C_l$.

Writing $e_a$ for the $a$-th standard column vector in $\bb R^{2l}$ the vectors
\begin{displaymath}
  e_a-e_b\hbox{ for }a<b\hbox{, \ and }e_a+e_b\hbox{ for }a\le b
\end{displaymath}
form a system of positive roots for $\Sp(2l,\bb F)$. The corresponding Borel subgroup $B$ consists of all matrices
\begin{equation}\label{borel_matrix}
  \left\lgroup\matrix{
    g & gs               \cr
    0 & {(g\transp)}\inv \cr
  }\right\rgroup
   \in\Mat{(2l\times2l,\bb F)}\hbox{ with }g\in B'\hbox{ and }
     s\in\Sym{(l,\bb F)}
\end{equation}
where $B'\subset\GL(l,\bb F)$ is the ordinary Borel group of upper triangular matrices, and $\Sym{(l,\bb F)}$ the space of symmetric matrices of that size. The elements of the Weyl group $W={\{\pm1\}}^l\Sym_l$ act as permutations $\sigma$ of the set $\{\pm1,\dots,\pm l\}$ with the property that $\sigma(-a)=-\sigma a$ for all $a$.

The  Schubert cells of $\lgrass{(2l)}$ are indexed by the right congruence classes in $W/\Sym_l$\ts; they are in one-to-one correspondence with the normalised permutations $\sigma$ --- those which satisfy the condition
\begin{displaymath}
  \sigma1<\cdots<\sigma l
\end{displaymath}
and thus are shortest in their congruence classes (recall that $\sigma$ acts on the set $\{\pm1,\dots,\pm l\}$ of \textit{signed} integers). The cells are simply characterised by the set $\left\{a\in\sigma\{1,\dots,l\}\,\big|\,a<0\right\}$, and the length
\begin{equation}\label{lag_length}
  d_\sigma=
    \left|\left\{(a,b)\in\{1,\dots,l\}^2\,\big|\,
            a\le b\hbox{ and }\sigma a+\sigma b<0\right\}\right|
\end{equation}
is the dimension of the Schubert cell $C_\sigma$ labelled by the class of $\sigma$. Of course $C_\sigma$ may be described explicitly, as follows. Define $n$ by $\sigma n<0<\sigma(n\!+\!1)$ and put $p=l\!-\!n$. Then a subspace of $\bb F^{2l}$ belongs to $C_\sigma$ if and only if it is the image subspace of a (unique) matrix
\begin{equation}\label{sp_matrix}
  \left\lgroup
    \begin{array}{ccccl | rcccc}
      \ast              &\ast &\cdots &\cdots &\ast
         &\circ        &\cdots &\cdots &\cdots &\circ   \cr
      1                 &0    &\cdots &\cdots &0
         &0            &\cdots &\cdots &\cdots &0       \cr
                        &\ast &\ast   &\cdots &\ast
         &\circ        &\cdots &\cdots &\cdots &\circ   \cr
                        &1    &0      &\cdots &0
         &0            &\cdots &\cdots &\cdots &0       \cr
                        &     &       &       &\ast
         &\circ        &\cdots &\cdots &\cdots &\circ   \cr
    \noalign{\vskip-7pt}
                        &     &\ddots &       &\,\vdots\,\,
         &\,\,\vdots\, &       &       &\      & \vdots \cr
    \noalign{\vskip-4pt}
                        &     &       &       &\ast
         &\circ        &\cdots &\cdots &\cdots &\circ   \cr
                        &     &       &       &1
         &0            &\cdots &\cdots &\cdots &0       \cr
      \strut           &     &       &       &\strut
          &\circ       &\cdots &\cdots &\cdots &\circ  \cr
      \hline
      &&&& \strut           &              &       &       &     &     \cr
      &&&&                   &1            &       &       &     &     \cr
      &&&&                   &\ast         &       &       &     &     \cr
    \noalign{\vskip-7pt}
      &&&&                   &\,\,\vdots\, &       &\ddots &     &     \cr
    \noalign{\vskip-4pt}
      &&&&                   &\ast         &       &       &     &     \cr
      &&&&                   &0            &\cdots &\cdot\;\,0 &1    &     \cr
      &&&&                   &\ast         &\cdots &\cdots &\ast &     \cr
      &&&&                   &0            &\cdots &\cdots &0    &1    \cr
      &&&&                   &\ast         &\cdots &\cdots &\ast &\ast \cr
    \end{array}
  \right\rgroup\in\Mat{(2l\times(p\!+\!n),\bb F)}
\end{equation}
where steps occur at the row indices
\begin{displaymath}
  l\!+\!1\!-\!\sigma b\hbox{ with }\sigma b>0\hbox{\q and\q}
  2l\!+\!1\!-\!|\sigma a|\hbox{ with }\sigma a<0,
\end{displaymath}
and where the unspecified entries are subject to the following conditions. All starred entries in either the upper left hand or the lower right hand submatrix may be chosen arbitrarily, and the other half are then determined by the condition that these submatrices span mutually orthogonal subspaces of $\bb F^l$. The circled entries are free within a symmetry condition dependent on the starred ones as in (\ref{borel_matrix}).

It is well-known how to count Lagrangians\ts:

\vspace*{12pt}
\begin{prop}\label{formula_L} There are exactly $L_l=\prod_{i=1}^l(q^i+1)$ Lagrangian subspaces of $\bb F^{2l}$, and more generally the number of $k$-dimensional isotropic subspaces is
\begin{displaymath}
  L_{lk}={\prod_{i=l-k+1}^l(q^{2i}-1)\over\prod_{i=1}^k(q^i-1)}
\end{displaymath}
for $k\le l$.
\end{prop}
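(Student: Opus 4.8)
The plan is to count ordered isotropic frames and divide by the number of ordered bases of a fixed $k$-dimensional space, rather than to evaluate the Schubert-cell sum directly. The decisive feature of symplectic geometry that makes this clean is that the form is alternating, so $\langle v,v\rangle=0$ for \emph{every} vector $v$; consequently extending an isotropic subspace imposes only an orthogonality condition on the new vector, never a separate isotropy condition.

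Concretely, I would first count the ordered isotropic $k$-frames $(v_1,\dots,v_k)$, that is, the $k$-tuples of linearly independent vectors spanning an isotropic subspace, by building them up one vector at a time. If $U_i=\langle v_1,\dots,v_i\rangle$ is already isotropic of dimension $i$, then a legitimate $v_{i+1}$ is exactly a vector of $U_i^\perp\setminus U_i$: membership in $U_i^\perp$ guarantees orthogonality to $U_i$ (isotropy of $v_{i+1}$ being automatic), while avoiding $U_i$ guarantees linear independence. Since $U_i$ is isotropic we have $U_i\subseteq U_i^\perp$ with $\dim U_i^\perp=2l-i$, so there are exactly $q^{2l-i}-q^i$ choices for $v_{i+1}$. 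Hence the number of isotropic $k$-frames is $\prod_{i=0}^{k-1}(q^{2l-i}-q^i)$, and dividing by the number $\prod_{i=0}^{k-1}(q^k-q^i)$ of ordered bases of a fixed $k$-space yields $L_{lk}$.

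It then remains only to bring this quotient into the asserted closed form. I would factor a power $q^i$ out of each factor in numerator and denominator, which produces a common prefactor $q^{0+1+\cdots+(k-1)}$ on both sides; cancelling it leaves $L_{lk}=\prod_{i=0}^{k-1}(q^{2(l-i)}-1)\big/\prod_{i=0}^{k-1}(q^{k-i}-1)$. Re-indexing the numerator by $j=l-i$ and the denominator by $m=k-i$ gives precisely $\prod_{i=l-k+1}^l(q^{2i}-1)\big/\prod_{i=1}^k(q^i-1)$. Finally, specialising to $k=l$ and using $q^{2i}-1=(q^i-1)(q^i+1)$ collapses the formula to $L_l=\prod_{i=1}^l(q^i+1)$.

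There is no deep obstacle here; once the symplectic observation is in place the argument is essentially bookkeeping. The one point that I expect to need the most care is the verification at each stage that membership in $U_i^\perp\setminus U_i$ is \emph{equivalent} to a valid extension of the isotropic frame, which rests squarely on the inclusion $U_i\subseteq U_i^\perp$, i.e.\ on $U_i$ being isotropic, so that the count $q^{2l-i}-q^i$ neither overcounts nor undercounts. As an alternative consistent with the recurrence technique used for Propositions \ref{formula_G} and \ref{formula_F}, one could instead split the Schubert cells of $\lgrass{(2l)}$ according to a distinguished hyperbolic plane and derive a two-term recurrence reducing $L_{lk}$ to $L_{l-1,k-1}$ and $L_{l-1,k}$; but the frame count produces both asserted formulas simultaneously and with less effort.
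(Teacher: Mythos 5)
Your argument is correct, and it is genuinely different from the one in the paper. You count ordered isotropic $k$-frames by successive extension --- using that the form is alternating, so the only constraint on $v_{i+1}$ is $v_{i+1}\in U_i^\perp\sm U_i$ with $U_i\subseteq U_i^\perp$ and $\dim U_i^\perp=2l-i$ --- and then divide by $|\GL(k,\bb F)|$; this yields $L_{lk}$ and the Lagrangian count $L_l$ simultaneously and is entirely self-contained. The paper instead stays within its Bruhat-decomposition framework: it first derives $L_l=\prod_{i=1}^l(q^i+1)$ from the two-term recursion $L_l=q^lL_{l-1}+L_{l-1}$ coming from the Schubert cell lengths (\ref{lag_length}), and then obtains the general $L_{lk}$ by double counting the incidence pairs $(L,K)$ of a Lagrangian $L$ and a $k$-dimensional isotropic $K\subset L$, which gives $L_{l-k}\cdot L_{lk}=L_l\cdot G_{lk}$ and lets it recycle Proposition \ref{formula_G}. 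Your route buys elementarity and independence from the Grassmannian formula and the Weyl group combinatorics; the paper's route buys coherence with the Schubert cell machinery used throughout and the structurally informative identity $L_{l-k}\cdot L_{lk}=L_l\cdot G_{lk}$, whose ingredient ``the Lagrangians containing $K$ correspond to Lagrangians of $K^\perp/K$'' is reused in spirit in the proof of Corollary \ref{formula_J}. The one point you flag as delicate --- that membership in $U_i^\perp\sm U_i$ is exactly equivalent to a valid extension --- is indeed the crux, but it holds precisely because the form is alternating and non-degenerate, so your count is exact.
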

\begin{proof}
For $l>0$ the set $\sigma\{1,\dots,l\}$ contains either $-l$ or $l$, so (\ref{lag_length}) gives the recursive formula $L_l=q^lL_{l-1}+L_{l-1}$ for $L_l$.

For general $k\le l$ we make two observations\ts: given an isotropic subspace $K\subset\bb F^{2l}$ of dimension $k$ the Lagrangians of $\bb F^{2l}$ that contain $K$ are in one-to-one correspondence with Lagrangian subspaces of $K^\perp/K$, while on the other hand given a Lagrangian $L\subset\bb F^{2l}$ every subspace of $L$ is isotropic. Thus counting pairs $(L,K)$ we obtain the identity
\begin{displaymath}
  L_{l-k}\cdot L_{lk}=L_l\cdot G_{lk},
\end{displaymath}
and the result follows using Proposition \ref{formula_G}.
\end{proof}
\vspace*{12pt}

For a given integer $j$ with $0\le j\le l$ we let
\begin{displaymath}
  E^j\oplus F^j\subset\bb F^l\oplus\bb F^l
\end{displaymath}
denote the sum of the subspaces spanned by the first $j$ coordinates in $\bb F^l\oplus0$, respectively in $0\oplus\bb F^l$. In order to compute
\begin{displaymath}
  K_{lj}:=\left|\left\{L\in\lgrass{(2l)}\,\big|\,L\cap E^j=0\right\}\right|
\end{displaymath}
we read off from (\ref{sp_matrix}) that
\begin{displaymath}
  \dim L\cap E^j=\left|\left\{b\in\{1,\dots,l\}\,\big|\,
                   \sigma b>l\!-\!j\right\}\right|,
\end{displaymath}
so that $K_{lj}$ counts the cases with $\sigma b\le l\!-\!j$ for all $b$, or equivalently
\begin{equation}\label{e_condition}
  \sigma a=a\!-\!1\!-\!l\hbox{\q for }a=1,\dots,j.
\end{equation}
In particular assuming $l$ and $j$ positive, the condition can only be met if $-l$ rather than $l$ belongs to $\sigma\{1,\dots,l\}$, so we have $K_{lj}=q^lK_{l-1,j-1}$ and conclude\ts:

\vspace*{12pt}
\begin{prop}\label{formula_K}
  $K_{lj}=q^{j(2l-j+1)/2}\cdot\disp\prod_{i=1}^{l-j}(q^i+1)$.
\end{prop}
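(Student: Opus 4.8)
The plan is to solve the recurrence $K_{lj}=q^lK_{l-1,j-1}$ just derived from the Schubert-cell description, using the single nontrivial input $K_{l0}=L_l$ as the base of an induction on $j$.

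First I would record that base case. For $j=0$ the constraint $L\cap E^0=0$ is vacuous, so $K_{l0}$ is simply the total number of Lagrangians of $\bb F^{2l}$, which is $L_l=\prod_{i=1}^l(q^i+1)$ by Proposition \ref{formula_L}. This already matches the asserted formula, since at $j=0$ the exponent $j(2l-j+1)/2$ vanishes and the product $\prod_{i=1}^{l-j}(q^i+1)$ reduces to $L_l$.

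Next I would iterate the recurrence $j$ times, which is legitimate for $0\le j\le l$ since both indices stay nonnegative throughout. Each application lowers both indices by one and contributes $q$ to the power of the current first index, giving

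\begin{displaymath}
  K_{lj}=q^l\cdot q^{l-1}\cdots q^{l-j+1}\cdot K_{l-j,0}
        =q^{\sum_{i=l-j+1}^l i}\cdot L_{l-j}.
\end{displaymath}

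The accumulated exponent is the sum of the $j$ consecutive integers from $l-j+1$ to $l$, which equals $j(2l-j+1)/2$, while $L_{l-j}=\prod_{i=1}^{l-j}(q^i+1)$; together these yield the claimed closed form.

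Since the recurrence is handed to us by the preceding Weyl-group and Schubert-cell analysis, no genuine obstacle remains: the only thing to watch is the exponent bookkeeping, namely that peeling off $q^l,q^{l-1},\dots,q^{l-j+1}$ accumulates the triangular sum $j(2l-j+1)/2$ rather than a plain power of $q$. As a confidence check I would inspect the extreme case $j=l$, where the formula predicts $q^{l(l+1)/2}$ Lagrangians meeting $\bb F^l\oplus0$ only in $0$; these are exactly the graphs of symmetric bilinear forms on $\bb F^l$, of which there are indeed $q^{l(l+1)/2}$, confirming the count.
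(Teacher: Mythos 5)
Your proof is correct and follows essentially the same route as the paper: the paper derives the recurrence $K_{lj}=q^lK_{l-1,j-1}$ from the Schubert-cell condition (\ref{e_condition}) and then simply ``concludes,'' leaving the iteration down to the base case $K_{l0}=L_l$ implicit, which is exactly the bookkeeping you carry out. Your sanity check at $j=l$ via graphs of symmetric matrices is a nice independent confirmation consistent with the Borel-subgroup description in (\ref{borel_matrix}).
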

\vspace*{12pt}

We now compute
$J_{lj}:=\left|\left\{L\in\lgrass{(2l)}\,\big|\,L\cap(E^j+F^j)=0\right\}\right|$.

\vspace*{12pt}
\begin{prop}\label{formula_J_K}
Assume $2j\le l$ and let $\sigma$ be a normalised permutation with $\sigma b\le l\!-\!j$ for all $b$. The proportion
\begin{displaymath}
  \left|\left\{L\in C_\sigma\,\big|\,L\cap(E^j+F^j)=0\right\}\right|
   :\left|\left\{L\in C_\sigma\,\big|\,L\cap E^j=0\right\}\right|
\end{displaymath}
is the same for all such $\sigma$ and equal to
$\prod_{i=l-2j+1}^{l-j}(1-q^{-i})$.
In particular
\begin{displaymath}
  {\textstyle J_{lj}\over\textstyle K_{lj}}
    =\disp\prod_{i=l-2j+1}^{l-j}(1-q^{-i}).
\end{displaymath}
\end{prop}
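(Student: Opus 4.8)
The plan is to prove the stronger cell-wise statement and then sum. For a valid $\sigma$ every $L\in C_\sigma$ already satisfies $L\cap E^j=0$, so the denominator equals $|C_\sigma|=q^{d_\sigma}$, and it suffices to show that the number of $L\in C_\sigma$ with $L\cap(E^j\!+\!F^j)=0$ is $\rho\cdot q^{d_\sigma}$ for the single constant $\rho=\prod_{i=l-2j+1}^{l-j}(1-q^{-i})$. Since the valid cells partition $\{L\,|\,L\cap E^j=0\}$, summing numerators and denominators then yields $J_{lj}/K_{lj}=\rho$.

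First I would restate the condition geometrically. Writing $e_1,\dots,e_l,f_1,\dots,f_l$ for the standard symplectic basis, we have $W:=E^j\!+\!F^j=\langle e_1,\dots,e_j,f_1,\dots,f_j\rangle$ with coordinate complement $W^\perp=\langle e_{j+1},\dots,e_l,f_{j+1},\dots,f_l\rangle$. Let $\pi\from V\too W^\perp$ be the deletion of the coordinates lying in $W$, so that $\ker\pi=W$. Then $L\cap W=0$ holds precisely when $\pi|_L$ is injective, that is of rank $l$, which is possible because $2j\le l$. Thus the whole problem is to count, over $C_\sigma$, how often $\pi|_L$ has full rank.

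Next I would read the ranks off the normal form (\ref{sp_matrix}) by classifying the $l$ columns according to the position of their pivot. Columns with $\sigma b>0$ carry pivots among the rows $e_{j+1},\dots,e_l$, and the negative columns with $|\sigma a|\le l\!-\!j$ carry pivots among $f_{j+1},\dots,f_l$; all $l\!-\!j$ of these pivots lie in $W^\perp$ and survive $\pi$, contributing an independent unitriangular system. The crucial combinatorial point is that the validity constraint $\sigma b\le l\!-\!j$ forces every index with $|\sigma b|>l\!-\!j$ to be negative, so for \emph{every} valid $\sigma$ there are exactly $j$ remaining columns — the negative ones with $|\sigma a|>l\!-\!j$ — and their pivots lie in $F^j\subset W$ and are killed by $\pi$. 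Hence $\pi|_L$ is injective if and only if the images of these $j$ pivot-less columns are independent modulo the span of the other $l\!-\!j$; reducing modulo that span projects them into the residual $2(l\!-\!j)-(l\!-\!j)=l\!-\!j$ coordinates, so the condition becomes that a certain matrix $M\in\Mat(j\by(l\!-\!j),\bb F)$ built from entries of (\ref{sp_matrix}) have full rank $j$. Both the number $j$ of pivot-less columns and the residual dimension $l\!-\!j$ are independent of $\sigma$, which is exactly what makes the shape of $M$, and so the eventual proportion, uniform.

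The hard part will be to verify that, as $L$ ranges over $C_\sigma$, the entries of $M$ are free and equidistributed over $\Mat(j\by(l\!-\!j),\bb F)$ independently of the remaining cell coordinates. This is delicate because the Lagrangian condition imposes the symmetry of the circled block and the orthogonality that ties the upper left starred entries of (\ref{sp_matrix}) to the lower right ones; one must check that none of these relations constrains $M$ and that the passage to the reduced images modulo the retained pivots is a bijective affine change of the free parameters. Granting this, the sought proportion is just the proportion of full-rank $j\by(l\!-\!j)$ matrices over $\bb F$, namely $\prod_{i=0}^{j-1}(1-q^{-(l-j-i)})=\prod_{i=l-2j+1}^{l-j}(1-q^{-i})=\rho$, which finishes the cell-wise claim and hence the proposition. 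As a consistency check I would finally note that replacing the reduction of $L$ in $V/E^j$ by a \emph{general} $l$-plane and invoking Grassmannian transversality as in Proposition \ref{formula_F} produces a different value; thus the symplectic constraint is genuinely used here and cannot be bypassed.
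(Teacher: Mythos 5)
Your proposal follows essentially the same route as the paper's proof: on each admissible cell the denominator is all of $C_\sigma$, only the $j$ columns of the normal form (\ref{sp_matrix}) whose pivots land in $F^j\subset E^j+F^j$ can contribute to the intersection, and $L\cap(E^j+F^j)=0$ reduces to the full-rank condition on a $j\times(l-j)$ block of cell coordinates, giving the $\sigma$-independent proportion $\prod_{i=l-2j+1}^{l-j}(1-q^{-i})$. The single step you defer --- that this block of entries is free and uniformly distributed despite the symmetry and orthogonality constraints in (\ref{sp_matrix}) --- is precisely the assertion the paper itself makes by inspection of (\ref{right_sp_matrix}) (that the non-forced rows of $e''$ and $f''$ are free), so your argument is complete modulo the same check, which does hold because the symmetric partners of the relevant circled entries lie in the other $n-j$ columns.
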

\begin{proof}
Fix $\sigma$ as in the hypothesis and let $L\in C_\sigma$ be the Lagrangian with spanning matrix (\ref{sp_matrix}). In view of (\ref{e_condition}) only the columns with index $p\!+\!1,\dots,p\!+\!j$ can contribute to the intersection $L\cap(E^j+F^j)$, and the submatrix comprising these columns has the form
\begin{equation}\label{right_sp_matrix}
  \left\lgroup
    \begin{array}{c}
      e' \cr
      e'' \cr
      \hline
      1 \cr
      f'' \cr
    \end{array}
  \right\rgroup\in\Mat{(2l\times j,\bb F)}
\end{equation}
with blocks $e'$ and the unit matrix $1$ of height $j$, and $e'',f''$ of height $l\!-\!j$. Furthermore, $e''$ has $p$ zero rows (with indices $l\!+\!1\!-\!\sigma b$ where $n<b\le l$) and $f''$ has $n\!-\!j$ zero rows (with indices $2l\!+\!1\!-\!|\sigma a|$ where $j<a\le n$, all indices counted with respect to the full matrix, on a scale from 1 to $2l$). The remaining rows of $e''$ and $f''$ are free and together form an $(l\!-\!j)\by j$ matrix $g$ which determines whether $L\cap(E^j+F^j)=0$\ts: this occurs if and only if $\rk g=j$.

Thus the proportion in question is that of the matrices of full rank amongst all $(l\!-\!j)\by j$ matrices, and therefore equal to
\begin{displaymath}
  \begin{array}{rcl}
    {\textstyle|\GL(l\!-\!j,\bb F)|\over
     \textstyle q^{j(l-2j)}\cdot|\GL(l\!-\!2j,\bb F)|}:q^{j(l-j)}
      &= &q^{-j(2l-3j)}\cdot
           {\textstyle q^{{(l-j)}^2}\;\cdot\prod_{i=1}^{l-j}\,(1-q^{-i})\over
            \textstyle q^{{(l-2j)}^2}\cdot\prod_{i=1}^{l-2i}(1-q^{-i})} \\
  \noalign{\vskip 3pt}
      &= &\prod_{i=l-2j+1}^{l-j}(1-q^{-i}).                             \\
  \end{array}
\end{displaymath}
\end{proof}
\vspace*{12pt}

We extend the definition of $J_{lj}$ to the case of $2j\ge l$ in terms of the more general
\begin{displaymath}
  M_{ljs}:=\left|\left\{L\in\lgrass{(2l)}\,\big|\,
    \dim L\cap(E^j+F^j)=s\right\}\right|
\end{displaymath}
by putting $J_{lj}=M_{lj,2j-l}$ if $2j\ge l$.

\vspace*{12pt}
\begin{cor}\label{formula_J}
We have
\begin{displaymath}
  J_{lj}
    =\disp q^{j^2}\cdot
    {\prod_{i=1}^{l-j}(q^{2i}-1)\over\prod_{i=1}^{l-2j}(q^i-1)}
    \hbox{\q if }2j\le l,
\end{displaymath}
and more generally
\begin{displaymath}\label{formula_M}
  M_{ljs}
    =\disp q^{{(j-s)}^2}\cdot
     {\textstyle\prod_{i=j-s+1}^j(q^{2i}\!-\!1)\cdot
                \prod_{i=1}^{l-j}(q^{2i}\!-\!1)\over
      \textstyle\prod_{i=1}^s(q^i\!-\!1)    \cdot
                \prod_{i=1}^{l+s-2j}(q^i\!-\!1)}
    \hbox{\q for }\max\{0,2j\!-\!l\}\le s\le j.
\end{displaymath}
\end{cor}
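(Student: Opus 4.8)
The plan is to prove the two formulas in turn, deriving the special case $J_{lj}$ (valid for $2j\le l$) first and then bootstrapping to the general $M_{ljs}$ by a symplectic reduction.

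First I would obtain the formula for $J_{lj}$ by simply combining Propositions \ref{formula_K} and \ref{formula_J_K}, which give $J_{lj}=K_{lj}\cdot\prod_{i=l-2j+1}^{l-j}(1-q^{-i})$. Writing $\prod_{i=l-2j+1}^{l-j}(1-q^{-i})=q^{-j(2l-3j+1)/2}\prod_{i=l-2j+1}^{l-j}(q^i-1)$, the two powers of $q$ combine to $q^{j^2}$, while the factorisation $q^{2i}-1=(q^i-1)(q^i+1)$ turns $\prod_{i=1}^{l-j}(q^i+1)\cdot\prod_{i=l-2j+1}^{l-j}(q^i-1)$ into $\prod_{i=1}^{l-j}(q^{2i}-1)\big/\prod_{i=1}^{l-2j}(q^i-1)$. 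This is routine and yields the first claim.

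For the general statement I would set $U:=E^j+F^j$ and observe that $U$ is a non-degenerate (symplectic) subspace of $\bb F^{2l}$ of dimension $2j$, its symplectic complement being immediate from the standard form. Since any Lagrangian $L$ is totally isotropic, $K:=L\cap U$ is an isotropic subspace of $U$; when $\dim L\cap U=s$ it may be any of the $L_{js}$ isotropic $s$-dimensional subspaces of $U$, counted by Proposition \ref{formula_L}. The key step is to count, for a fixed such $K$, the Lagrangians $L$ with $L\cap U=K$ exactly. Here I would pass to the symplectic reduction $K^\perp/K$, which is symplectic of dimension $2(l-s)$: as in the proof of Proposition \ref{formula_L} the Lagrangians $L\supset K$ correspond bijectively to Lagrangians $\bar L$ of $K^\perp/K$, and the image $\bar U:=(U\cap K^\perp)/K$ is the reduction of $U$ by $K$, symplectic of dimension $2(j-s)$. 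One checks that $\dim L\cap U=s$ translates into $\bar L\cap\bar U=0$, so the Lagrangians in question are exactly those meeting $\bar U$ trivially.

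Since $\Sp(2(l-s),\bb F)$ acts transitively on its non-degenerate subspaces of a given dimension, $\bar U$ is equivalent to the standard $E^{j-s}+F^{j-s}$ and the last count is $J_{l-s,j-s}$; note that the range $\max\{0,2j-l\}\le s\le j$ is precisely what guarantees $2(j-s)\le l-s$, so that the already-established first formula applies. This gives $M_{ljs}=L_{js}\cdot J_{l-s,j-s}$, and substituting the two closed forms produces the stated expression after cancellation (taking $s=0$ recovers $J_{lj}$ as a consistency check). The main obstacle is the reduction argument: one must verify carefully that $\bar U$ is non-degenerate of the right dimension and that the condition $L\cap U=K$ corresponds exactly to $\bar L\cap\bar U=0$; once this geometric bookkeeping is in place the remaining algebra is mechanical.
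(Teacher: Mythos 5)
Your proposal is correct and follows essentially the same route as the paper: the first formula by combining Propositions \ref{formula_K} and \ref{formula_J_K}, and the second via the factorisation $M_{ljs}=L_{js}\cdot J_{l-s,j-s}$ obtained by fixing the isotropic intersection $L\cap(E^j+F^j)$ and passing to the symplectic reduction. The only cosmetic difference is that the paper normalises the intersection to $E^s$ and uses the explicit projection $E^l\oplus F^l\to E^l/E^s\oplus F^l/F^s$, whereas you argue with a general $K$ and invoke transitivity of the symplectic group; these are the same argument.
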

\begin{proof}
The first formula results from Propositions \ref{formula_K} and \ref{formula_J_K}, and it implies the second by the following reasoning.

For the Lagrangians $L$ to be counted the intersection $L':=L\cap(E^j+F^j)$ is an $s$-dimensional isotropic subspace of $E^j\oplus F^j$, and Proposition \ref{formula_L} gives the number $L_{js}$ of possible $L'$. Now fixing $L'$ we determine the number of Lagrangians $L$ with $L\cap(E^j+F^j)=L'$, and for this purpose may assume $L'=E^s$. Such $L$ intersect $F^s$ trivially, and therefore, under the projection $E^l\oplus F^l\to E^l/E^s\oplus F^l/F^s$ they correspond to Lagrangians $\overline L\subset E^l/E^s\oplus F^l/F^s$ with $\overline L\cap(E^j/E^s\oplus F^j/F^s)=0$. Thus there are exactly $J_{l-s,j-s}$ such $L$ for any given $L'$, and we obtain the formula $M_{ljs}=L_{js}\cdot J_{l-s,j-s}$ and thereby the result.
\end{proof}
\vspace*{12pt}

We finally state the asymptotic properties of the data we have collected.

\vspace*{12pt}
\begin{prop}\label{asymptotics_M}
Let $l$ and $j$ be fixed.
\begin{enumerate}
  \item If $2j\neq l$ then
    \begin{displaymath}
      0\le1-{J_{Nl,Nj}\over L_{Nl}}\le{q\inv+|\log(1-q\inv)|\over1-q\inv}
        \cdot q^{-|l-2j|N}
    \end{displaymath}
    for all $N\in\bb N$.
    \smallskip
    \item Assume that $2j\le l$ and, for any given $\epsilon>0$ put
       \begin{displaymath}
      T_N:=
        \left|\left\{L\in\lgrass{(2Nl)}\,\big|\,
          \dim L\cap(E^{Nj}\oplus F^{Nj})\ge N\epsilon
        \right\}\right|.
    \end{displaymath}
    Then
    \begin{displaymath}
      {T_N\over L_{Nl}}\le
        e^{{2q\over q-1}|\log(1-q\inv)|}
          \cdot\textstyle{{(q^2\!-\!1)}^2\over{(q^2\!-\!1)}^2-q}\cdot q^{-N^2\epsilon^2}
    \end{displaymath}
    for all $N\in\bb N$.
\end{enumerate}
\end{prop}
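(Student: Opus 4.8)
The plan is to imitate the two-part structure of Proposition~\ref{asymptotics_H} almost verbatim, now feeding in the symplectic counts $J_{lj}$, $M_{ljs}$ and $L_l$ from Corollary~\ref{formula_J} and Proposition~\ref{formula_L} in place of $F_{lkj}$, $H_{lkjs}$ and $G_{lk}$. For statement~1 I would first reduce to the case $2j<l$. When $2j>l$ one has the duality $J_{lj}=J_{l,l-j}$: writing $V=E^j+F^j$, a nondegenerate symplectic subspace whose symplectic complement is a coordinate subspace equivalent to $E^{l-j}+F^{l-j}$, the identity $(L\cap V^\perp)^\perp=L+V$ valid for a Lagrangian $L=L^\perp$ gives $\dim L\cap V^\perp=\dim L\cap V+(l-2j)$, so that $\dim L\cap V=2j-l$ holds exactly when $\dim L\cap V^\perp=0$. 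Hence $2j>l$ inherits the estimate from $2j<l$ with the exponent $|l-2j|N$ unchanged, just as the orthogonal-complement argument of Proposition~\ref{formula_F_next} serves the linear case.

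So assume $0<2j<l$ and set $m=l-2j>0$. Substituting $q^{2i}-1=(q^i-1)(q^i+1)$ into the formula of Corollary~\ref{formula_J} and cancelling against $L_{Nl}=\prod_{i=1}^{Nl}(q^i+1)$ (the net power of $q$ vanishing), the quotient collapses to
\[
  Q_N:=\frac{J_{Nl,Nj}}{L_{Nl}}
     =\frac{\prod_{i=Nm+1}^{N(l-j)}(1-q^{-i})}{\prod_{i=N(l-j)+1}^{Nl}(1+q^{-i})},
\]
which is manifestly $\le1$, giving $0\le1-Q_N$. Taking logarithms, I would bound the numerator below by the concavity estimate $\log(1-q^{-i})\ge-uq^{-i}$ with $u=-q\log(1-q^{-1})$ as in~(\ref{geom_series}), and the denominator above by $\log(1+q^{-i})\le q^{-i}$. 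Summing the two geometric tails (from $i=Nm+1$ and from $i=N(l-j)+1$) and using $l-j\ge m$ to absorb the second into $q^{-mN}$ yields $\log Q_N\ge-\frac{u+1}{q-1}q^{-mN}$, whence $1-Q_N\le\frac{u+1}{q-1}q^{-mN}$. Inserting $u=q|\log(1-q^{-1})|$ rewrites $\frac{u+1}{q-1}$ as $\frac{q^{-1}+|\log(1-q^{-1})|}{1-q^{-1}}$, the asserted constant; the extra summand $q^{-1}$ over the linear case is exactly the contribution of the $(1+q^{-i})$ factors in the denominator.

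For statement~2, with $2j\le l$, I would read off from Corollary~\ref{formula_J} the one-step ratio
\[
  \frac{M_{Nl,Nj,\sigma}}{M_{Nl,Nj,\sigma-1}}
    =q^{2\sigma-2Nj-1}\cdot
     \frac{q^{2(Nj-\sigma+1)}-1}{(q^\sigma-1)\,(q^{N(l-2j)+\sigma}-1)}.
\]
Bounding the numerator by $q^{2(Nj-\sigma+1)}$ and, crucially using $l-2j\ge0$, the denominator factor by $q^{N(l-2j)+\sigma}-1\ge q^\sigma-1$, the powers of $q$ cancel to leave precisely $\frac{M_{Nl,Nj,\sigma}}{M_{Nl,Nj,\sigma-1}}<\frac{q}{(q^\sigma-1)^2}$ --- the same bound as in the linear case. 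From here the computation of Proposition~\ref{asymptotics_H} transfers unchanged: since $M_{Nl,Nj,0}=J_{Nl,Nj}$ and $J_{Nl,Nj}\le L_{Nl}$, telescoping and summing the geometric tail over $\sigma\ge s$ gives
\[
  \sum_{\sigma\ge s}\frac{M_{Nl,Nj,\sigma}}{L_{Nl}}
    \le e^{2u/(q-1)}\cdot\frac{(q^2-1)^2}{(q^2-1)^2-q}\cdot q^{-s^2},
\]
and the substitution $s=\lceil N\epsilon\rceil$ produces the stated bound on $T_N/L_{Nl}$.

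The genuine work sits in statement~1: one must carry out the product cancellation with care and verify that the $(1+q^{-i})$ denominator is responsible for the extra $q^{-1}$ in the constant, and one must supply the symplectic duality $J_{lj}=J_{l,l-j}$ to dispose of $2j>l$. Statement~2 comes almost for free once the one-step ratio is recognised to reproduce $q/(q^\sigma-1)^2$; the only subtlety is that here it is the inequality $l-2j\ge0$ --- not the equality $j+k=l$ of the linear case --- that keeps $q^{N(l-2j)+\sigma}-1$ at least $q^\sigma-1$, and this is what lets the single argument cover the entire range $2j\le l$ uniformly.
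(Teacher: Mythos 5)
Your proposal is correct and follows exactly the route the paper intends: its own proof of Proposition~\ref{asymptotics_M} consists of the single remark that it is ``quite analogous'' to Proposition~\ref{asymptotics_H}, and you have carried out that analogy faithfully --- the cancellation $q^{2i}-1=(q^i-1)(q^i+1)$ against $L_{Nl}$ with the net power of $q$ vanishing, the extra $q^{-1}$ in the constant from the $(1+q^{-i})$ factors, the symplectic-complement duality for $2j>l$ mirroring Proposition~\ref{formula_F_next}, and the one-step ratio collapsing to $q/(q^\sigma-1)^2$ so that the tail estimate transfers verbatim. All the computations check out, so no further changes are needed.
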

\begin{proof}
Quite analogous to that of Proposition \ref{asymptotics_H}.
\end{proof}
\vspace*{12pt}

%
%
%

\end{document}